 

\documentclass[letterpaper, 10 pt, conference]{ieeeconf}  
\IEEEoverridecommandlockouts             

\pdfobjcompresslevel=0                  
\usepackage{bm}
\usepackage{enumerate}
\usepackage{commath}
\usepackage{graphicx}
\graphicspath{{Pictures/}}
\usepackage{amsmath}
\usepackage{subcaption}
\usepackage{breqn}
\usepackage{cite}
\usepackage[table]{xcolor}
\usepackage{booktabs}
\usepackage{empheq}
\usepackage{amsfonts}
\usepackage{amssymb}

\usepackage{amsthm}
\usepackage{hyperref}
\usepackage{xcolor}
\usepackage{mathrsfs}

\usepackage{accents}
\usepackage{thmtools}
\usepackage{thm-restate}
\usepackage{float}
\usepackage{comment}
\usepackage[normalem]{ulem}
\usepackage{algorithm}
\usepackage{algpseudocode}

\theoremstyle{plain}

\newtheorem{definition}{Definition}
\newtheorem{assumption}{Assumption}

\newtheorem{proposition}{Proposition}

\newtheorem*{problem*}{Problem}
\newtheorem*{theorem*}{Theorem}
\newtheorem{assumption*}{Assumption}

\declaretheorem[name=Theorem]{thm}

\theoremstyle{definition}
\newtheorem{example}{Example}

\definecolor{amber}{rgb}{1.0, 0.3, 0.0}

\newcommand{\redtext}[1]{{\color{red}#1}}

\newcommand{\myvar}[1]{\bm{#1}}
\newcommand{\myvarfrak}[1]{\bm{\mathfrak{#1}}}

\newcommand{\myvardot}[1]{\dot{\myvar{#1}}}

\newcommand{\myset}[1]{\mathcal{#1}} 






\begin{document}

\title{
Compatibility checking of multiple control barrier functions \\ for input constrained systems
}

\author{Xiao Tan, and Dimos V. Dimarogonas 
\thanks{ This work was supported in part by Swedish Research Council (VR), in part by  Swedish Foundation for Strategic Research (SSF) COIN Project, in
part by  ERC CoG LEAFHOUND, in part by EU CANOPIES Project, and in part by  Knut and Alice Wallenberg Foundation (KAW). The authors are with the School of EECS, Royal Institute of Technology (KTH), 100 44 Stockholm, Sweden (Email: 
        {\tt\small xiaotan, dimos@kth.se}).}
}

\maketitle
\thispagestyle{plain}
\pagestyle{plain}


\begin{abstract}
  State and input constraints are ubiquitous in control system design. One recently developed tool to deal with these constraints is control barrier functions (CBF) which transform state constraints into conditions  in the input space. CBF-based controller design thus incorporates both the CBF conditions and input constraints in a quadratic program. However, the CBF-based controller is  well-defined only if  the CBF conditions are compatible. In the case of perturbed systems, robust compatibility is of relevance. In this work, we propose an algorithmic solution to verify or falsify the (robust) compatibility of given CBFs \textit{a priori}. Leveraging the Lipschitz properties of the CBF conditions, a grid sampling and refinement method  with theoretical analysis and guarantees is proposed.

\end{abstract}


\section{Introduction}

Control design for dynamical systems with input and state constraints is an omnipresent problem in engineering and has been extensively investigated over the last few decades. Among all the investigated control methods, such as model predictive control  (MPC)\cite{mayne2000constrained}, reference governor \cite{bemporad1998reference}, barrier Lyapunov functions (BLF) \cite{Tee2009}, and prescribed performance control (PPC) \cite{bechlioulis2008robust}, the so-called control barrier functions (CBF) has revived recently\cite{wieland2007constructive,Xu2015a} and gained increasing popularity in the control and robotic community. The latter three methods relate a system state and possibly a time state with a real number.  Unlike its counterparts as in BLF and PPC, CBF is negative if the system state is in the unsafe/undesired regions. By enforcing an inequality constraint on the system input, which is referred to as the CBF condition, the system trajectory is guaranteed to stay within a given safe region for all time. A CBF-based controller is thus formed as a quadratic program (QP)\cite{Xu2015a} with the CBF conditions and actuator limits as the constraints while trying to minimize its differences to a pre-designed task-satisfying controller.  A moderate magnitude  control signal is obtained when the system state is close to the boundary of the safety set, making it applicable even in the presence of noise. Compared to MPC schemes, the CBF formulation only requires to solve a small size quadratic program online and thus is more suitable for embedded systems thanks to today's increasing computational power.

Another nice property of the CBF formulation is its modular design feature. In the case that multiple state constraints are present for the system, i.e., the safe region is the intersection of all these state constraints, we can add multiple CBF conditions to the QP formulation each of which corresponds to one state constraint. Under the core assumption that the CBF-induced QP is always feasible, or equivalently, the CBF conditions are compatible,  the satisfaction of all the state constraints is guaranteed for all time.

However, the compatibility of multiple CBFs is in general difficult to check. The problem is even more challenging if input limits are also present. In \cite{xu2018constrained}, sufficient and necessary conditions on CBF compatibility are discussed for SISO systems without input constraints. In the recent work \cite{cortez2021robust}, the authors propose a mixed-initiative control formulation that satisfies the input bound explicitly and enforces CBF conditions only in a neighborhood of the safety boundary. Under an assumption that the neighborhood around the boundary of safety region specified by each CBF should not overlap, the mixed-initiative formulation is applicable in the presence of multiple CBFs. In \cite{notomista2021safety},  a navigation problem is considered and a CBF-based QP for the image of obstacles in the ``ball world" is employed. The QP feasibility is guaranteed thanks to the special structure of the problem, yet input constraints are missing.

More relevant to our work is the sum-of-square (SoS) approach in \cite{prajna2007framework}  for verifying worst-case and stochastic system safety by constructing a barrier certificate. A recent work\cite{clark2021verification} extends the SoS technique to verify if a candidate function is indeed a CBF by checking whether a polynomial control input  exists and satisfies the CBF condition in the safe region. In \cite{axton2022on}, a SoS-based compatibility verification scheme for multiple CBFs is proposed. However, these results are only applicable to polynomial control systems.  Moreover, the feasibility result is also subject to the order of the polynomials, and a failure to find such polynomial inputs provides no falsification guarantee.

In this paper, we consider the compatibility checking problem when multiple control barrier functions are present for input constrained systems. We aim to give a verification or falsification on the compatibility of multiple CBFs prior to their online implementation. In that respect, a grid sampling and refinement method is proposed leveraging the Lipschitz properties of the CBF conditions. We show that 1) the proposed algorithm will output the exact compatibility result if it terminates, 2) the algorithm is guaranteed to terminate in finite steps if the multiple CBFs are robustly compatible, and 3) the upper bound of the robustness level can be obtained if a lower bound of the lattice size is incorporated.

\section{Preliminaries}

\textit{Notation}: The operator $\nabla:C^1(\mathbb{R}^n) \to \mathbb{R}^n$ is defined as the gradient $\frac{\partial}{\partial x}$ of a scalar-valued differentiable function with respect to $\myvar{x}$. The Lie derivatives of a function $h(\myvar{x})$ for the system $\myvardot{x} = \myvarfrak{f}(\myvar{x}) + \myvarfrak{g}(\myvar{x}) \myvar{u}$ are denoted by $L_{\mathfrak{f}} h = \nabla h^\top \myvarfrak{f}(\myvar{x}) \in \mathbb{R}$ and $L_{\mathfrak{g}} h = \nabla h^\top \myvarfrak{g}(\myvar{x}) \in \mathbb{R}^{1\times m}$, respectively. 
The interior and boundary of a set $\myset{A}$ are denoted $\text{Int}(\myset{A})$ and $\partial \myset{A}$, respectively.
A continuous function $\alpha:[0,a) \to [0,\infty)$ for $a \in \mathbb{R}_{>0}$ is a \textit{class $\mathcal{K}$ function} if it is strictly increasing and $\alpha(0) = 0$ \cite{Khalil2002}. A continuous function $\alpha:(-b,a) \to (-\infty,\infty)$ for $a,b \in \mathbb{R}_{>0}$ is an \textit{extended class $\mathcal{K}$ function} if it is strictly increasing and $\alpha(0) = 0$.
Vector inequalities are to be interpreted element-wise. $\myvar{0}, \myvar{1}$ refer to vectors of proper dimensions with all entries to be $0$ or $1$, respectively.

Consider the nonlinear control affine system
\begin{equation} \label{eq:nonlinear_dyn}
    \myvardot{x} = \myvarfrak{f}(\myvar{x}) + \myvarfrak{g}(\myvar{x}) \myvar{u},
\end{equation}
where the state $\myvar{x} \in \mathbb{R}^n$, and the control input $ \myvar{u} \in \mathbb{U} \subset \mathbb{R}^m$. Assume that the vector fields $\myvarfrak{f}(\myvar{x})$ and $\myvarfrak{g}(\myvar{x})$ are locally Lipschitz functions in $\myvar{x}$. A set $\myset{A} \subset \mathbb{R}^n$ is called \textit{forward invariant}, if for any initial condition $ \myvar{x}_0 \in \myset{A}  $, the system solution $\myvar{x}(t,\myvar{x}_0)  \in \myset{A}$ for all $t$ in the maximal time interval of existence.

Consider the safety set $\myset{C}$ defined as an intersection of superlevel sets of continuously differentiable functions $h_i:\mathbb{R}^n \to \mathbb{R}, i \in \mathcal{I} = \{1,2,...,N\}$:
\begin{equation} \label{eq:set_c}
    \myset{C} = \{ \myvar{x}\in \mathbb{R}^n: h_i(\myvar{x})\ge 0, i\in \mathcal{I} \}.
\end{equation}

\begin{definition}[Compatible CBFs] \label{def:compatible_cbf}
The functions $h_i(\myvar{x}), i \in \mathcal{I}$ are \textit{compatible} control barrier functions (CBF) for \eqref{eq:nonlinear_dyn} if there exists an open set $\myset{D}\supseteq \myset{C}$ and locally Lipschitz extended class $\mathcal{K}$ functions $\alpha_i$ such that, $ \forall \myvar{x}\in \myset{D},   \forall i \in \mathcal{I},$
\begin{equation} \label{eq:cbf}
    \exists \myvar{u} \in \mathbb{U}, \  L_{\myvarfrak{f}}h_i(\myvar{x}) + L_{\myvarfrak{g}}h_i(\myvar{x})\myvar{u} + \alpha_i(h_i(\myvar{x})) \ge 0.
\end{equation}
\end{definition}

For given differentiable functions $h_i$ and extended class $\mathcal{K}$ functions $\alpha_i$, $i\in \mathcal{I}$, define
\begin{multline} \label{eq:k_x}
        \myset{K}(\myvar{x}) = \{\myvar{u}\in \mathbb{U}: L_{\myvarfrak{f}}h_i(\myvar{x})   +  L_{\myvarfrak{g}}h_i(\myvar{x})\myvar{u} \\ + \alpha_i(h_i(\myvar{x})) \ge 0, \forall i\in \mathcal{I} \}.
\end{multline}
Then $h_i(\myvar{x}), i\in \mathcal{I},$ being compatible CBFs is equivalent to that $\myset{K}(\myvar{x})\neq \emptyset, \forall \myvar{x}\in \myset{D}$. 

\begin{proposition} \label{prop:foward_invariance}
If $h_i(\myvar{x}), i \in \mathcal{I}$, are compatible CBFs, then any locally Lipschitz continuous feedback control law $\myvar{u}(\myvar{x})\in \myset{K}(\myvar{x})$ renders the safe set $\myset{C}$ forward invariant.
\end{proposition}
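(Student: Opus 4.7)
The plan is to reduce the multi-CBF forward invariance claim to $N$ independent applications of the comparison lemma, one per function $h_i$. First, fix any initial condition $\myvar{x}_0 \in \myset{C}$. Since $\myvarfrak{f}, \myvarfrak{g}$ are locally Lipschitz and $\myvar{u}(\myvar{x})$ is locally Lipschitz by hypothesis, the closed-loop vector field $\myvarfrak{f}(\myvar{x}) + \myvarfrak{g}(\myvar{x}) \myvar{u}(\myvar{x})$ is locally Lipschitz, so a unique absolutely continuous solution $\myvar{x}(t,\myvar{x}_0)$ exists on some maximal interval $[0, T_{\max})$. I would work on this interval and prove $\myvar{x}(t,\myvar{x}_0) \in \myset{C}$ for all $t \in [0,T_{\max})$.

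Next, for each $i \in \myset{I}$, define $\eta_i(t) = h_i(\myvar{x}(t,\myvar{x}_0))$. Differentiating along trajectories gives
\begin{equation*}
    \dot{\eta}_i(t) = L_{\myvarfrak{f}} h_i(\myvar{x}(t)) + L_{\myvarfrak{g}} h_i(\myvar{x}(t)) \myvar{u}(\myvar{x}(t)).
\end{equation*}
Because $\myvar{u}(\myvar{x}) \in \myset{K}(\myvar{x})$ by assumption, the defining inequality of $\myset{K}$ in \eqref{eq:k_x} yields
\begin{equation*}
    \dot{\eta}_i(t) \ge -\alpha_i(\eta_i(t))
\end{equation*}
for almost every $t \in [0, T_{\max})$, with $\eta_i(0) = h_i(\myvar{x}_0) \ge 0$ since $\myvar{x}_0 \in \myset{C}$.

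Now I would invoke the comparison lemma against the autonomous scalar ODE $\dot{y}_i = -\alpha_i(y_i)$, $y_i(0) = \eta_i(0) \ge 0$. Local Lipschitzness of $\alpha_i$ (from Definition~\ref{def:compatible_cbf}) guarantees uniqueness of the comparison solution, and since $\alpha_i(0) = 0$, the point $y_i = 0$ is an equilibrium; by uniqueness the solution starting at $y_i(0) \ge 0$ cannot cross zero from above, so $y_i(t) \ge 0$ on its domain. The comparison lemma then gives $\eta_i(t) \ge y_i(t) \ge 0$ for all $t \in [0, T_{\max})$, i.e., $h_i(\myvar{x}(t,\myvar{x}_0)) \ge 0$. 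Since this holds for every $i \in \myset{I}$, we conclude $\myvar{x}(t,\myvar{x}_0) \in \myset{C}$ throughout $[0,T_{\max})$, which is the forward invariance of $\myset{C}$.

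The argument is essentially routine once the ingredients are assembled, and I do not expect any serious obstacle: the only point requiring mild care is justifying the comparison-lemma step when $\eta_i(0) = 0$, which is handled by the local Lipschitz property of $\alpha_i$ ensuring uniqueness of the scalar comparison ODE and thus forbidding $y_i$ (and hence $\eta_i$) from dipping below zero. Note also that this proposition makes no claim about $T_{\max} = \infty$; forward invariance is asserted only on the maximal interval of existence, consistent with the definition given before \eqref{eq:nonlinear_dyn}.
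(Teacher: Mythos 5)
Your proof is correct, but it takes a genuinely different route from the paper's. The paper dispatches this proposition in one line by invoking the Brezis version of Nagumo's theorem at $\partial \myset{C}$: at a boundary point at least one $h_i$ vanishes, so $\alpha_i(h_i(\myvar{x}))=0$ and the condition $\myvar{u}(\myvar{x})\in\myset{K}(\myvar{x})$ forces $L_{\myvarfrak{f}}h_i(\myvar{x})+L_{\myvarfrak{g}}h_i(\myvar{x})\myvar{u}(\myvar{x})\ge 0$ for every active constraint, which is read as a subtangentiality condition. You instead decompose the claim into $N$ scalar differential inequalities $\dot{\eta}_i\ge-\alpha_i(\eta_i)$ along the single closed-loop trajectory and apply the comparison lemma to each, using local Lipschitzness of $\alpha_i$ and $\alpha_i(0)=0$ to pin the comparison solution at or above the equilibrium $y_i=0$. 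The trade-off is worth noting: the Nagumo route is shorter, but for an intersection $\myset{C}=\cap_i\{h_i\ge 0\}$ it strictly requires the closed-loop vector field to lie in the tangent cone of the \emph{intersection}, which at points where several constraints are simultaneously active is in general only a subset of the intersection of the individual tangent cones, so a constraint qualification is implicitly being used; your componentwise comparison argument sidesteps that issue entirely and in addition yields the quantitative bound $h_i(\myvar{x}(t))\ge y_i(t)$, which is more information than bare invariance. One step you should make explicit: the inequality $\dot{\eta}_i(t)\ge-\alpha_i(\eta_i(t))$ is only guaranteed while $\myvar{x}(t)\in\myset{D}$, since that is where $\myvar{u}(\myvar{x})\in\myset{K}(\myvar{x})$ is assumed to hold; this is closed by a routine first-exit argument (since $\myset{C}$ is closed and $\myset{D}\supseteq\myset{C}$ is open, the trajectory cannot leave $\myset{D}$ before leaving $\myset{C}$, and on any interval where it remains in $\myset{D}$ your comparison shows it remains in $\myset{C}$), so it does not affect the validity of the proof.
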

\begin{proof}
    This is evident from the Brezis version of Nagumo's Theorem at $\partial \myset{C}$. Please check \cite[Theorem 4]{redheffer1972theorems} for details.
\end{proof}

In the case that the system is subject to disturbances/uncertainty, a relevant concept is that of \textit{robustly compatible} CBFs.
\begin{definition}[Robustly compatible CBFs]
The functions $h_i(\myvar{x}), i \in \mathcal{I}$ are \textit{robustly compatible} control barrier functions with robustness level $\eta>0$ for system \eqref{eq:nonlinear_dyn}, if there exists an open set $\myset{D}\supseteq \myset{C}$ and locally Lipschitz extended class $\mathcal{K}$ functions $\alpha_i$ such that, $ \forall \myvar{x}\in \myset{D},   \forall i \in \mathcal{I},$
\begin{equation} \label{eq:robust_cbf}
    \exists \myvar{u} \in \mathbb{U}, \  L_{\myvarfrak{f}}h_i(\myvar{x}) + L_{\myvarfrak{g}}h_i(\myvar{x})\myvar{u} + \alpha_i(h_i(\myvar{x}))\ge \eta.
\end{equation}
\end{definition}

The condition in \eqref{eq:robust_cbf} is stricter compared to the condition in \eqref{eq:cbf}. If \eqref{eq:robust_cbf} holds, then the safety set $\myset{C}$ can be rendered forward invariant for the perturbed system $\dot{\myvar{x}} = \mathfrak{f}(\myvar{x}) + \mathfrak{g}(\myvar{x}) \myvar{u} + \mathfrak{p}(\myvar{x})\myvar{\omega}$ as long as $|  L_{\mathfrak{p}\myvar{\omega}}h_i(\myvar{x})|\leq \eta$, $\forall i\in \mathcal{I},\forall \myvar{x}\in \myset{D}$. The analysis follows similarly as in \cite{jankovic2018robust}\cite[Remark 3]{xiao2021high}.

A CBF-based safety controller $\myvar{u}:\myset{D}\to \mathbb{R}^m$ is given in the following form 
\begin{equation} \label{eq:cbf_controller}
\begin{aligned}
        \myvar{u}(\myvar{x}) & = \arg \min_{\myvar{v}} \| \myvar{v} - \myvar{u}_{nom}(\myvar{x})\| \\
        \textup{s.t. }  & \myvar{v} \in \myset{K}(\myvar{x}),
\end{aligned}
\end{equation}
where $\myvar{u}_{nom}$ is a nominal controller focusing on task completion. For example, $\myvar{u}_{nom}$ can be designed for state stabilization, reference tracking or can be given directly by a human user. One core problem for the CBF-based controller formulation in \eqref{eq:cbf_controller} is the compatibility, i.e., $\myset{K}(\myvar{x})\neq \emptyset, \forall \myvar{x}\in \myset{D}$. When external disturbances are present, robustly compatibility is a desired property for practical implementation.

In this work, we propose an algorithmic solution to  verify or falsify the hypothesis that $h_i(\myvar{x}), i\in \mathcal{I}$ are (robustly) compatible. The compatibility verification algorithm only needs to be executed once and offline, before applying the CBF-based safety controller \eqref{eq:cbf_controller} online. For notational brevity, given $h_i(\myvar{x}), \alpha_i(\cdot)$ and the control system in \eqref{eq:nonlinear_dyn}, we denote 
\begin{equation*}
    A(\myvar{x}):=\begin{pmatrix}
L_{\myvarfrak{g}}h_1(\myvar{x}) \\
L_{\myvarfrak{g}}h_2(\myvar{x}) \\
... \\
L_{\myvarfrak{g}}h_N(\myvar{x})
\end{pmatrix}, b(\myvar{x}):=\begin{pmatrix}
L_{\myvarfrak{f}}h_1(\myvar{x}) + \alpha_1(h_1(\myvar{x})) \\
L_{\myvarfrak{f}}h_2(\myvar{x}) + \alpha_2(h_2(\myvar{x})) \\
... \\
L_{\myvarfrak{f}}h_N(\myvar{x}) + \alpha_N(h_N(\myvar{x}))
\end{pmatrix}.
\end{equation*}
The problem is thus to verify whether
\begin{equation} \label{eq:feasibility problem}
    \sup_{\myvar{u}\in \mathbb{U}} A(\myvar{x})\myvar{u} + b(\myvar{x}) \ge \myvar{0}, \forall \myvar{x}\in \myset{D}.
\end{equation}
for compatibility, and whether
\begin{equation} \label{eq:robust feasibility problem}
    \sup_{\myvar{u}\in \mathbb{U}} A(\myvar{x})\myvar{u} + b(\myvar{x}) \ge \eta\myvar{1}, \forall \myvar{x}\in \myset{D}.
\end{equation}
for robust compatibility with robustness level $\eta>0$.

To simplify our analysis though without jeopardizing the generality, we assume the following: 
\begin{assumption} \label{ass:compact C}
The safe set $\myset{C}$ is compact.
\end{assumption}

\begin{assumption} \label{ass:convex U}
The input set $\mathbb{U}$ is convex.
\end{assumption}
\noindent Under Assumption \ref{ass:convex U} and in view of \eqref{eq:k_x}, we know that $\myset{K}(\myvar{x})$ is either empty or convex, for any $\myvar{x}\in \myset{D}$.

\section{Proposed solutions}
\subsection{Grid sampling algorithm using $n$-cubes}
We recall some basic notions for approximating a compact set in $\mathbb{R}^n$ using $n$-cubes. Let $\myset{S}\subset \mathbb{R}^n$ be a compact set, and $\{ \myvar{e}_i, i = 1,2,...,n\}$ the canonical basis of $\mathbb{R}^n$.  For $\myvar{x}\in \mathbb{R}^n, r>0$, define
\begin{multline}
\label{eq:lattice}
    P_{\textup{lattice}}(\myvar{x},r) = \{ \myvar{y} \in \mathbb{R}^n:   \myvar{y} = \myvar{x} + 
    \sum_{i\in \{1,2,...,n\}}  a_i r \myvar{e}_i,\\
    \forall a_i \in  \mathbb{N}, i = \{1,2,...,n\} \},
\end{multline}
\begin{multline}
 \label{eq:n cubes}
    B(\myvar{x},r) = \{ \myvar{y}\in \mathbb{R}^{n}: \myvar{y} = \myvar{x} +  \sum_{i\in \{1,2,...,n\}} k_i r \myvar{e}_i, \\
    \forall k_i\in [-1/2,1/2], i\in \{1,2,...,n\} \}.
\end{multline}
Here $P_{\textup{lattice}}$ denotes a set of points that forms a regular lattice with size $r$ in $\mathbb{R}^n$ and $\myvar{x}\in P_{\textup{lattice}}$; $B(\myvar{x},r)$ denotes a $n$-cube in $\mathbb{R}^n$ centered at $\myvar{x}$ with size $r$.

Now we propose the following grid sampling algorithm. First we calculate the range limit $\rho_{\myvar{e}_i}^{\min}$ and $\rho_{\myvar{e}_i}^{\max}, i = 1,2, ..., n$ of the set $\myset{S}$ (Line 1 of Algorithm \ref{alg:grid sampling}). Since $\myset{S}$ is compact, $\myset{S} $ is a subset of the hyperrectangle $ [\rho_{\myvar{e}_1}^{\min}, \rho_{\myvar{e}_1}^{\max}] \times [\rho_{\myvar{e}_2}^{\min}, \rho_{\myvar{e}_2}^{\max}]\times ... \times [\rho_{\myvar{e}_n}^{\min}, \rho_{\myvar{e}_n}^{\max}]$ (Line 2).  Then we construct a regular lattice $P_{\textup{lattice}}$ around the center point of the hyperrectangle with size $r$. In Line 3, we obtain a set $P_{\textup{cand}}$ by intersecting $P_{\textup{lattice}}$ with the inflated hyperrectangle $ [\rho_{\myvar{e}_1}^{\min}-r/2,\rho_{\myvar{e}_1}^{\max}+r/2]\times [\rho_{\myvar{e}_2}^{\min}-r/2,\rho_{\myvar{e}_2}^{\max}+r/2] \times ... \times [\rho_{\myvar{e}_n}^{\min}-r/2,\rho_{\myvar{e}_n}^{\max}+r/2] $. We then collect all the points $\myvar{p}$ in $P_{\textup{cand}}$ around which the $n$-cube with size $r$ intersects with the set $\myset{S}$ (Line 4). The algorithm returns $G$ as a Cartesian product of $P$ and the singleton $\{r\}$.

\begin{algorithm} 
\caption{\texttt{GridSampling}} \label{alg:grid sampling}
\begin{algorithmic}[1]
\Require Compact set $ \myset{S}\subset\mathbb{R}^n$, lattice size $r$
\State   Calculate $ \rho_{\myvar{e}_i}^{\min} = \min_{\myvar{x}\in \myset{S}} \myvar{e}_i^\top \myvar{x},\rho_{\myvar{e}_i}^{\max} = \max_{\myvar{x}\in \myset{S}} \myvar{e}_i^\top \myvar{x} $ for $i \in \{1,2,...,n\}$.
\State  Construct a regular lattice $P_{\textup{lattice}}$ around $(\frac{\rho_{\myvar{e}_1}^{\min}+\rho_{\myvar{e}_1}^{\max} }{2}, \frac{\rho_{\myvar{e}_2}^{\min}+\rho_{\myvar{e}_2}^{\max} }{2}, ..., \frac{\rho_{\myvar{e}_n}^{\min}+\rho_{\myvar{e}_n}^{\max} }{2} )$ with size $r$.
\State Construct $P_{\textup{cand}} = P_{\textup{lattice}} \cap [\rho_{\myvar{e}_1}^{\min}-r/2,\rho_{\myvar{e}_1}^{\max}+r/2]\times [\rho_{\myvar{e}_2}^{\min}-r/2,\rho_{\myvar{e}_2}^{\max}+r/2] \times ... \times [\rho_{\myvar{e}_n}^{\min}-r/2,\rho_{\myvar{e}_n}^{\max}+r/2]  $.
\State $P = \{ \myvar{p} \in P_{\textup{cand}}:  B(\myvar{p},r)\cap \myset{S}\neq \emptyset \}, G = P\times\{r\}$.
\State \textbf{return} $G$. 

\end{algorithmic}
\end{algorithm}

\begin{proposition} \label{prop:grid sampling}
Given a compact set $\myset{S}\subset\mathbb{R}^n$and a lattice size $r>0$, then the following hold:
\begin{itemize}
    \item[1)] $G$, from Algorithm \ref{alg:grid sampling}, is of finite cardinality, and
    \item[2)]  $\myset{S}\subseteq \cup_{\myvar{p}\in P} B(\myvar{p},r) $, where $P$ is given in Algorithm \ref{alg:grid sampling}, Line 4.
\end{itemize}
\end{proposition}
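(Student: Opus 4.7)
The plan is to establish each claim by directly tracing the construction of Algorithm~\ref{alg:grid sampling} and exploiting two elementary facts: a regular lattice intersected with a bounded box has only finitely many points, and the $n$-cubes $B(\myvar{q},r)$ centred at the lattice points tile $\mathbb{R}^n$ up to boundaries of measure zero.

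For (1), I would first observe that compactness of $\myset{S}$ makes every $\rho_{\myvar{e}_i}^{\min}, \rho_{\myvar{e}_i}^{\max}$ finite, so the inflated hyperrectangle $R := [\rho_{\myvar{e}_1}^{\min}-r/2, \rho_{\myvar{e}_1}^{\max}+r/2]\times\cdots\times[\rho_{\myvar{e}_n}^{\min}-r/2, \rho_{\myvar{e}_n}^{\max}+r/2]$ is bounded. Points of $P_{\textup{lattice}}$ lying in $R$ correspond, via \eqref{eq:lattice}, to integer tuples $(a_1,\dots,a_n)$ with each $a_i r$ forced into a bounded interval, so only finitely many tuples qualify. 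Thus $|P_{\textup{cand}}|<\infty$, and since $P\subseteq P_{\textup{cand}}$ and $G = P\times\{r\}$, $|G|<\infty$ follows immediately.

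For (2), given an arbitrary $\myvar{x}\in \myset{S}$, I would round coordinate-wise in the lattice's affine frame to produce $\myvar{q}\in P_{\textup{lattice}}$ satisfying $|q_i - x_i|\le r/2$ for every $i$; writing $\myvar{x} - \myvar{q} = \sum_i k_i r \myvar{e}_i$ with $k_i \in [-1/2,1/2]$, definition~\eqref{eq:n cubes} gives $\myvar{x}\in B(\myvar{q},r)$. To conclude $\myvar{q}\in P$, I need two checks: (a) $\myvar{q}\in P_{\textup{cand}}$, which holds because $x_i\in [\rho_{\myvar{e}_i}^{\min},\rho_{\myvar{e}_i}^{\max}]$ combined with $|q_i-x_i|\le r/2$ places $\myvar{q}$ inside $R$; and (b) $B(\myvar{q},r)\cap \myset{S}\neq \emptyset$, which is witnessed by $\myvar{x}$ itself. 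Line~4 then certifies $\myvar{q}\in P$, and arbitrariness of $\myvar{x}$ yields $\myset{S}\subseteq \cup_{\myvar{p}\in P}B(\myvar{p},r)$.

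The whole argument is essentially bookkeeping, so I do not expect a real obstacle. The one point worth articulating carefully is the role of the $r/2$ inflation in Line~3: without it, a point of $\myset{S}$ near the boundary of the original hyperrectangle could be covered only by a lattice cube whose centre lies just outside the box, and such a centre would be discarded. The $r/2$ buffer is exactly the margin needed to guarantee that the nearest lattice centre is always retained in $P_{\textup{cand}}$, which is what makes the covering claim (2) work.
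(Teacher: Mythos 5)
Your proof is correct. Part (1) is identical in substance to the paper's argument: compactness gives finite range limits, the inflated box is bounded, hence finitely many lattice points survive Line 3. For part (2) the paper argues by contradiction: if some $\myvar{x}\in\myset{S}$ were uncovered by $\cup_{\myvar{p}\in P}B(\myvar{p},r)$ it would also be uncovered by $\cup_{\myvar{p}\in P_{\textup{cand}}}B(\myvar{p},r)$ (since $\myvar{x}$ itself would witness the intersection test of Line 4), contradicting the inclusion $\myset{S}\subseteq\cup_{\myvar{p}\in P_{\textup{cand}}}B(\myvar{p},r)$ --- an inclusion the paper asserts ``can be straightforwardly checked'' without proof. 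Your direct version supplies exactly that missing step via the coordinate-wise rounding to the nearest lattice centre, and your closing remark about the $r/2$ inflation being the precise margin that keeps that centre inside $P_{\textup{cand}}$ is the right observation. One pedantic caveat applying equally to both proofs: the rounding (and the paper's asserted covering) requires the lattice coefficients $a_i$ in \eqref{eq:lattice} to range over all integers rather than $\mathbb{N}$ as literally written; both arguments implicitly read the definition that way.
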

\begin{proof}
    Since the set $\myset{S}$ is compact, the lower and upper range limit $\rho_{\myvar{e}_i}^{\min}$ and  $\rho_{\myvar{e}_i}^{\max}$, $i = 1,2, .., n$, given in Line 1 of Algorithm \ref{alg:grid sampling}, are finite for every dimension. This leads to the fact that    the hyperrectangle $[\rho_{\myvar{e}_1}^{\min}-r/2,\rho_{\myvar{e}_1}^{\max}+r/2]\times [\rho_{\myvar{e}_2}^{\min}-r/2,\rho_{\myvar{e}_2}^{\max}+r/2] \times ... \times [\rho_{\myvar{e}_n}^{\min}-r/2,\rho_{\myvar{e}_n}^{\max}+r/2] $ is bounded. Recall that by definition \eqref{eq:lattice}, $P_{\textup{lattice}}$ denotes a regular lattice in $\mathbb{R}^n$, and we thus know that
    $P_{\textup{cand}}$ has a finite cardinality, which implies that $G$  also has a finite cardinality. Now we show Property 2) by contradiction. Assume that there exists $\myvar{x}\in \myset{S}$ and $\myvar{x}\notin  \cup_{\myvar{p}\in P} B(\myvar{p},r) $. In view of the definition of $P$, this implies that $\myvar{x}\notin  \cup_{\myvar{p}\in P_{\textup{cand}}} B(\myvar{p},r)$.  This yields a contradiction since $\myvar{x}\in \myset{S}\subseteq [\rho_{\myvar{e}_1}^{\min},\rho_{\myvar{e}_1}^{\max}]\times [\rho_{\myvar{e}_2}^{\min},\rho_{\myvar{e}_2}^{\max}] \times ... \times [\rho_{\myvar{e}_n}^{\min},\rho_{\myvar{e}_n}^{\max}] \subseteq    \cup_{\myvar{p}\in P_{\textup{cand}}} B(\myvar{p},r)$.  The former set inclusion is trivial in view of the definition of $\rho_{\myvar{e}_i}^{\min},\rho_{\myvar{e}_i}^{\max}$. The latter set inclusions can be straightforwardly checked by discussing all possible relations of the points in $P_{cand}$ and the hyberrectangle. 
\end{proof}

From now on, we denote $\textup{Bound}(\myset{S})$ the bounding box $[\rho_{\myvar{e}_1}^{\min}-r/2,\rho_{\myvar{e}_1}^{\max}+r/2]\times [\rho_{\myvar{e}_2}^{\min}-r/2,\rho_{\myvar{e}_2}^{\max}+r/2] \times ... \times [\rho_{\myvar{e}_n}^{\min}-r/2,\rho_{\myvar{e}_n}^{\max}+r/2] $ of a compact set $\myset{S}$.

\begin{example}

\begin{figure}
    \centering
    \includegraphics[width=\linewidth]{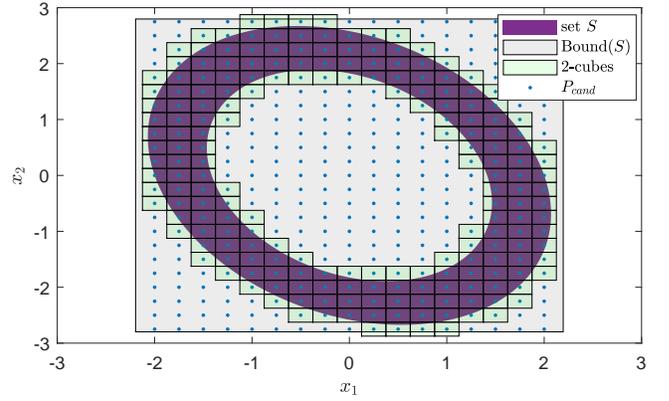}
    \caption{Grid sampling of a set $\myset{S}$ in 2-D. Here the set $\myset{S}$ is shown in violet, $\textup{Bound}(\myset{S})$ is shown in gray, the $2$-cubes generated from Algorithm \ref{alg:grid sampling} are in light green, and all the points in blue form $P_{\textup{cand}}$. We observe that $\myset{S}\subset \textup{Bound}(\myset{S}) $ and $\myset{S}$ is over-approximated by the union of the $2$-cubes. }
    \label{fig:gridsampling}
\end{figure}
Here we show an example of Algorithm \ref{alg:grid sampling} with the set $\myset{S} = \{ \myvar{x}\in \mathbb{R}^2: 1\leq \myvar{x}^\top Q\myvar{x} \leq 2, \textup{ where } Q = \begin{psmallmatrix}
0.5 & 0.1 \\
0.1 & 0.3\end{psmallmatrix}\}$ and $r = 0.25$. From Fig. \ref{fig:gridsampling}, we observe that $G$ has a finite cardinality and $\myset{S}\subseteq \cup_{\myvar{p}\in P} B(\myvar{p},r) $. It is worth noting that $\cup_{\myvar{p}\in P} B(\myvar{p},r) \nsubseteq \textup{Bound}(\myset{S}) $, where $P$ is given in Algorithm \ref{alg:grid sampling} Line 4, as shown in Fig. \ref{fig:gridsampling}. We also note that the intersection condition in Line $4$ can be checked numerically as a feasibility problem. 
\end{example}

\subsection{Proposed verification algorithm}

 Now consider the compatibility verification problem in \eqref{eq:feasibility problem}. For any $\myvar{x}\in \textup{Bound}(\myset{\myset{C}})$, define
 \begin{equation} \label{eq:c_x}
     \begin{aligned}
         & c(\myvar{x}) = \max_{\myvar{u},t} t \\
         & \textup{s.t. } A(\myvar{x}) \myvar{u} + b(\myvar{x}) \geq t \myvar{1}_{N}, \\
         & \hspace{20pt} \myvar{u} \in \mathbb{U}.
     \end{aligned}
 \end{equation}
 In the case that $\mathbb{U}$ is a polytopic set,  $c(\myvar{x})$ is obtained by solving a linear program. In the general case where $\mathbb{U}$ is convex, $c(\myvar{x})$ is obtained from a convex optimization. One interpretation is that $c(\myvar{x})$ indicates the largest robustness level at $\myvar{x}$ up to which the CBF conditions or the input constraints are to be breached.
 
 Recall that the candidate CBFs $h_i(\myvar{x}), i = 1,2,..., N$ are continuously differentiable, the vector fields $\myvarfrak{f}(\myvar{x})$ and $\myvarfrak{g}(\myvar{x})$ are locally Lipschitz, and thus
$A(\myvar{x}), b(\myvar{x})$ in \eqref{eq:feasibility problem} are locally Lipschitz.  Specifically, denote the respective Lipschitz constants in the bounding box $\textup{Bound}(\myset{C})$ with respect to the $l_{\infty}$ norm as $L_{A,\infty}, L_{b,\infty}$, i.e.,
\begin{equation} \label{eq:lipschitz constant}
\begin{aligned}
   & \| A(\myvar{x}) - A(\myvar{x}^\prime)\|_{\infty} \leq L_{A,\infty} \| \myvar{x} - \myvar{x}^\prime \|_{\infty}, \\
  & \| b(\myvar{x}) - b(\myvar{x}^\prime)\|_{\infty} \leq L_{b,\infty} \| \myvar{x} - \myvar{x}^\prime \|_{\infty},
\end{aligned}
\end{equation}
for all $\myvar{x},\myvar{x}^\prime \in \textup{Bound}(\myset{C})$\footnote{Here $\| A\|_{\infty}$, where $A$ is a matrix, refers to the induced matrix norm and can be calculated as the maximum absolute row sum of $A$.}. 

If $c(\myvar{x})>0$ for some $\myvar{x}$, based on the Lipschitz continuity of $A(\myvar{x})$ and $b(\myvar{x})$, there must exist a neighborhood around $\myvar{x}$ where the CBFs $h_i(\myvar{x})$ are compatible. This  is formally shown below.

\begin{proposition} \label{prop:probe and expand}
For any $\myvar{x}\in \textup{Bound}(\myset{\myset{C}})$, if $c(\myvar{x})>0$, then $
\sup_{\myvar{v}\in \mathbb{U}}A(\myvar{x}^\prime)\myvar{v}+b(\myvar{x}^\prime) \ge \myvar{0}$ for all $\myvar{x}^\prime \in    B(\myvar{x},\rho(\myvar{x}))\cap \textup{Bound}(\myset{\myset{C}})$ with \begin{equation} \label{eq:rho}
    \rho(\myvar{x})  = \frac{ 2c(\myvar{x})}{L_{A,\infty}\| \myvar{u}^\star (\myvar{x}) \|_{\infty} + L_{b,\infty}},
\end{equation}
where $L_{A,\infty}, L_{b,\infty}$ are the Lipschitz constants of $A(\myvar{x}), b(\myvar{x})$ with respect to the $l_{\infty}$ norm as per \eqref{eq:lipschitz constant}, respectively, and $\myvar{u}^\star(\myvar{x})$ is the optimal solution to \eqref{eq:c_x} at $\myvar{x}$.
\end{proposition}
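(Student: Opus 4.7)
The plan is to exhibit a single feasible control, namely the optimizer $\myvar{u}^\star(\myvar{x})$ from the definition of $c(\myvar{x})$, and show that it continues to satisfy the CBF conditions at every nearby point $\myvar{x}'$ by controlling the perturbations through the Lipschitz constants.

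First I would unpack \eqref{eq:c_x}: since $c(\myvar{x}) > 0$, the optimizer $\myvar{u}^\star(\myvar{x})$ lies in $\mathbb{U}$ and satisfies $A(\myvar{x})\myvar{u}^\star(\myvar{x}) + b(\myvar{x}) \ge c(\myvar{x})\myvar{1}_N$ componentwise. At any other point $\myvar{x}'$, I plug in the same candidate input and write
\begin{equation*}
A(\myvar{x}')\myvar{u}^\star(\myvar{x}) + b(\myvar{x}') = \bigl[A(\myvar{x})\myvar{u}^\star(\myvar{x}) + b(\myvar{x})\bigr] + \bigl[A(\myvar{x}') - A(\myvar{x})\bigr]\myvar{u}^\star(\myvar{x}) + \bigl[b(\myvar{x}') - b(\myvar{x})\bigr],
\end{equation*}
so the question reduces to bounding the last two ``perturbation'' terms in the $l_\infty$ sense.

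Next I would combine the Lipschitz estimates in \eqref{eq:lipschitz constant} with the submultiplicative property of the induced matrix norm to obtain
\begin{equation*}
\bigl\| [A(\myvar{x}') - A(\myvar{x})]\myvar{u}^\star(\myvar{x}) + [b(\myvar{x}') - b(\myvar{x})] \bigr\|_\infty \le \bigl(L_{A,\infty}\|\myvar{u}^\star(\myvar{x})\|_\infty + L_{b,\infty}\bigr)\|\myvar{x}' - \myvar{x}\|_\infty.
\end{equation*}
The key geometric observation, which provides the factor of $2$ in \eqref{eq:rho}, is that by definition \eqref{eq:n cubes} any $\myvar{x}' \in B(\myvar{x},\rho(\myvar{x}))$ satisfies $\|\myvar{x}' - \myvar{x}\|_\infty \le \rho(\myvar{x})/2$. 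Plugging this together with \eqref{eq:rho} yields that the perturbation is bounded componentwise by $c(\myvar{x})$.

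Combining the two steps, each entry of $A(\myvar{x}')\myvar{u}^\star(\myvar{x}) + b(\myvar{x}')$ is at least $c(\myvar{x}) - c(\myvar{x}) = 0$, so $\myvar{u}^\star(\myvar{x})$ is a feasible witness at $\myvar{x}'$ and the supremum over $\mathbb{U}$ is therefore nonnegative. I do not anticipate a genuine obstacle here; the main thing to get right is to match the choice of norms (induced $l_\infty$ on $A$, $l_\infty$ on the vector $\myvar{u}^\star$, and $l_\infty$ on the state increment) so that the Lipschitz bounds compose cleanly, and to remember that the ``size'' parameter in $B(\cdot,\cdot)$ refers to the side length of the $n$-cube rather than its half-width, which is exactly what produces the $2$ in the numerator of $\rho(\myvar{x})$.
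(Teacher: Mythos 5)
Your proposal is correct and follows essentially the same route as the paper's proof: the same decomposition of $A(\myvar{x}^\prime)\myvar{u}^\star(\myvar{x}) + b(\myvar{x}^\prime)$ around the optimizer at $\myvar{x}$, the same $l_\infty$ Lipschitz bound on the perturbation terms, and the same observation that the side-length convention in \eqref{eq:n cubes} gives $\|\myvar{x}^\prime - \myvar{x}\|_\infty \le \rho(\myvar{x})/2$, which accounts for the factor of $2$ in \eqref{eq:rho}. No gaps.
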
 

\begin{proof}
    For any $\myvar{x}^\prime \in B(\myvar{x},\rho)\cap \textup{Bound}(\myset{\myset{C}})$, we have
    \begin{multline}
 \label{eq:Ax'+bx'}
        A(\myvar{x}^\prime)\myvar{u}^\star(\myvar{x}) + b(\myvar{x}^\prime ) 
        = (A(\myvar{x}^\prime)  - A(\myvar{x})) \myvar{u}^\star(\myvar{x}) \\ + (b(\myvar{x}^\prime) - b(\myvar{x})) + A(\myvar{x})\myvar{u}^\star(\myvar{x})+ b(\myvar{x})
    \end{multline}
    In view of \eqref{eq:lipschitz constant}, we have 
    \begin{equation} \label{eq:worst case error}
    \begin{aligned}
       & \|(A(\myvar{x}^\prime) - A(\myvar{x})) \myvar{u}^\star(\myvar{x}) + b(\myvar{x}^\prime) - b(\myvar{x}) \|_{\infty}\\
        & \leq \|A(\myvar{x}^\prime) - A(\myvar{x}) \|_{\infty} \|\myvar{u}^\star(\myvar{x}) \|_{\infty} + \|b(\myvar{x}^\prime) - b(\myvar{x}) \|_{\infty} \\
        & \leq L_{A,\infty} \| \myvar{x} - \myvar{x}^\prime \|_{\infty}  \|\myvar{u}^\star(\myvar{x}) \|_{\infty} +  L_{b,\infty} \| \myvar{x} - \myvar{x}^\prime\|_{\infty}
    \end{aligned}
    \end{equation}
    In view of $\myvar{x}^\prime \in B(\myvar{x},\rho)$, and $\rho$ in \eqref{eq:rho},  we obtain $\| \myvar{x} - \myvar{x}^\prime\|_{\infty}\leq \rho/2 = \frac{ c(\myvar{x})}{L_{A,\infty}\| \myvar{u}^\star(\myvar{x}) \|_{\infty} + L_{b,\infty}}$. Thus, $ \|(A(\myvar{x}^\prime) - A(\myvar{x})) \myvar{u}^\star(\myvar{x}) + b(\myvar{x}^\prime) - b(\myvar{x}) \|_{\infty} \leq c(\myvar{x})$. From \eqref{eq:Ax'+bx'} and $A(\myvar{x})\myvar{u}^\star(\myvar{x})+ b(\myvar{x})\ge c(\myvar{x})\myvar{1}$, we further obtain $A(\myvar{x}^\prime)\myvar{u}^\star + b(\myvar{x}^\prime )\ge \myvar{0}$, which completes the proof.
\end{proof}

\begin{algorithm}[h]
\caption{\texttt{CompatibilityChecking}} \label{alg:compatibility checking}
\begin{algorithmic}[1]
\Require $h_i(\myvar{x}),\alpha_i(\cdot)$, initial size $r_0$, decaying factor $\lambda$
\State \textbf{Initialization:}
\State  \hspace{12pt} $k = 0$, obtain $\myset{C} $ from \eqref{eq:set_c}, $G_{0} \leftarrow \texttt{GS}(\myset{C}, r_0)$, $ G_{1} = \emptyset $.
\While{$G_{k}\neq \emptyset$}
\For{each $ (\myvar{x},r)\in  G_{k} $}
\State  $c \leftarrow c(\myvar{x})$ from \eqref{eq:c_x}, $ \rho \leftarrow \rho(\myvar{x})$ from \eqref{eq:rho}.
\If{$c<0$}  \Comment{Found an incompatible state;}
    \State \textbf{return False}. 
\ElsIf{$\rho\ge r$} \Comment{Compatibility checked;}
    \State remove $(\myvar{x},r)$ from $G_{k} $. 
\Else \Comment{Compatibility partially checked;}
\State remove $(\myvar{x},r)$ from $G_{k} $, $r^\prime\leftarrow \lambda r$.
\State $G_{k+1} \leftarrow G_{k+1} \cup  \texttt{GS}(B(\myvar{x},r)\setminus B(\myvar{x},\rho), r^\prime)$.
\EndIf
\EndFor
\State $ k = k+1, G_{k+2} = \emptyset$.
\EndWhile
\State \textbf{return True}.
\end{algorithmic}
{ \small *\texttt{GS} stands for \texttt{GridSampling} given in Algorithm \ref{alg:grid sampling}.}
\end{algorithm}

Built on above analysis, we design a  compatibility checking algorithm  using grid sampling and refinement. As given in Algorithm \ref{alg:compatibility checking}, the safety set $\myset{C}$ is firstly over-approximated using \texttt{GridSampling} Algorithm with an initial lattice size $r_0$. This will yield a finite set $G_0$ of $n$-cubes that is to be checked later. Recall that in Problem formulation \eqref{eq:feasibility problem} and \eqref{eq:robust feasibility problem}, we need to check the compatibility over a set $\myset{D}\supseteq \myset{C}$. Here we take $\myset{D} = \cup_{(\myvar{x}_i,r_0) \in G_0} B(\myvar{x}_i,r_0) $, which is a super set of $\myset{C}$ from Proposition \ref{prop:grid sampling}, item 2).  Choosing $r_0$ is important and depends on how large buffering zone one allows outside the safety set. For each $n$-cube $B(\myvar{x},r)$ in $G_k$, represented as a $(\myvar{x},r)$ pair in Line 4, we calculate the robustness level $c$ and the size $\rho$ of a guaranteed compatible $n$-cube centered at $\myvar{x}$ from \eqref{eq:c_x} and \eqref{eq:rho}, respectively. If $c<0$, then an incompatible state is found and the algorithm terminates and returns \texttt{False}. If $\rho \ge r$, then we know that the CBFs are compatible for all the states within the $n$-cube $B(\myvar{x},r)$ and we remove $(\myvar{x},r)$ from $G_k$; otherwise, we refine the remaining unchecked region $B(\myvar{x},r)\setminus B(\myvar{x},\rho)$ with a discounted lattice size $r^\prime = \lambda r$ and include the new $n$-cubes in $G_{k+1}$. After checking all the $n$-cubes in $G_k$, we iterate the process again for $G_{k+1}$. Once $G_{k+1}=\emptyset$, the algorithm terminates and returns \texttt{True}.

The following properties provide a guarantee on the finite-step termination of the algorithm and the compatibility property certified from its termination.
\begin{thm} \label{thm:compatibility}
Given control barrier functions $h_i(\myvar{x})$, extended class $\mathcal{K}$ functions $\alpha_i(\cdot)$  with $i \in \mathcal{I}$, an initial lattice size $r_0 >0$ and a decaying factor $0<\lambda<1$, we have:
\begin{enumerate}
    \item If Algorithm \ref{alg:compatibility checking} terminates, it gives verification or falsification on the CBF compatibility as per Def. \ref{def:compatible_cbf};
    \item if $\mathbb{U}$ is bounded, and the CBFs $h_i(\myvar{x})$ are robustly compatible with robustness level $\eta>0$ in $\textup{Bound}(\myset{C})$, then  Algorithm \eqref{alg:compatibility checking} terminates in finite steps.
    \item If $\mathbb{U}$ is bounded, and a lower bound of the lattice size $\underline{r}$ is incorporated, i.e., Algorithm \ref{alg:compatibility checking} terminates if $r\leq \underline{r}$ in Line 4, then Algorithm \ref{alg:compatibility checking} terminates in finite steps and gives  one of the following three results:
    \begin{itemize}
        \item[i.] $h_i(\myvar{x}), i\in \mathcal{I}$ are compatible;
        \item[ii.] $h_i(\myvar{x}), i\in \mathcal{I}$ are  incompatible; 
        \item[iii.] $h_i(\myvar{x}), i\in \mathcal{I}$ are not robustly compatible with robust level greater than
        \begin{equation} \label{eq:eta_prime}
            \eta^\prime= \lambda^{-1} \underline{r}( \max_{ \myvar{u}\in \mathbb{U}} L_{A,\infty}\| \myvar{u} \|_{\infty} + L_{b,\infty})/2.
        \end{equation}
    \end{itemize}
\end{enumerate}
\end{thm}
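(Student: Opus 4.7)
The plan is to take the three items in order, because Parts~1 and~2 supply the tools needed for Part~3.

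I would start with Part~1 by tracing the two exit branches of Algorithm~\ref{alg:compatibility checking}. If it returns \texttt{False} at Line~7, then the sampled $\myvar{x}$ satisfies $c(\myvar{x})<0$, which by the definition of the linear program in~\eqref{eq:c_x} means that no $\myvar{u}\in\mathbb{U}$ makes all $N$ CBF inequalities nonnegative at $\myvar{x}$; since $\myvar{x}\in\myset{D}:=\bigcup_{(\myvar{x}_i,r_0)\in G_0} B(\myvar{x}_i,r_0)\supseteq\myset{C}$ by Proposition~\ref{prop:grid sampling}, this falsifies Definition~\ref{def:compatible_cbf}. If instead the algorithm returns \texttt{True}, every cube ever placed in some $G_k$ has been removed through Line~9, and Proposition~\ref{prop:probe and expand} certifies the CBF inequality on the entirety of each such cube; the refinement step in Line~12 together with Proposition~\ref{prop:grid sampling} guarantees that the collection of verified cubes covers $\myset{D}$, establishing compatibility.

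For Part~2, the approach is to convert robust compatibility into a uniform lower bound on $\rho$. Boundedness of $\mathbb{U}$ gives $\|\myvar{u}^\star(\myvar{x})\|_\infty\leq U_{\max}:=\max_{\myvar{u}\in\mathbb{U}}\|\myvar{u}\|_\infty$, while the hypothesis of~\eqref{eq:robust feasibility problem} gives $c(\myvar{x})\geq\eta$ for all $\myvar{x}\in\textup{Bound}(\myset{C})$. Plugging both into~\eqref{eq:rho} yields $\rho(\myvar{x})\geq \rho_{\min}:=2\eta/(L_{A,\infty} U_{\max}+L_{b,\infty})>0$ uniformly. Choosing $K=\lceil \log_\lambda(\rho_{\min}/r_0)\rceil$, after at most $K$ refinement rounds every surviving cube has size $r\leq\rho_{\min}\leq\rho(\myvar{x})$, so the test $\rho\geq r$ in Line~8 fires and the cube is discarded. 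Since each level $G_k$ has finite cardinality by Proposition~\ref{prop:grid sampling} applied to the compact set $B(\myvar{x},r)\setminus B(\myvar{x},\rho)$, the while-loop exits in finitely many iterations.

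For Part~3 I would first observe that finite termination is automatic once $\underline{r}$ is enforced: with $\lambda<1$, any refinement chain reaches a cube of size $\leq\underline{r}$ in at most $\lceil\log_\lambda(\underline{r}/r_0)\rceil$ steps, so only finitely many cubes are ever processed. The core of item~(iii) is its contrapositive: suppose the CBFs were robustly compatible with some level $\eta>\eta'$. Then the Part~2 argument gives $\rho(\myvar{x})>2\eta'/(L_{A,\infty}U_{\max}+L_{b,\infty})=\underline{r}/\lambda$ everywhere, where the last equality is simply the definition of $\eta'$ in~\eqref{eq:eta_prime}. A cube $(\myvar{x},r)$ with $r\leq\underline{r}$ that triggers case~(iii) must have been produced at Line~12 from a parent $(\myvar{x}_p, r/\lambda)$ whose size satisfies $r/\lambda\leq\underline{r}/\lambda<\rho(\myvar{x}_p)$. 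But then the parent would have been handled at Line~9 instead of being refined, a contradiction; hence case~(iii) is impossible and only outcomes~(i) or~(ii) can be returned under robust compatibility with level $>\eta'$. Combining with Part~1 gives the three-way classification.

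The main obstacle I expect is the size bookkeeping in Part~3: relating the size of the offending cube to that of its parent via the factor $\lambda$ and matching the resulting threshold $\underline{r}/\lambda$ to the explicit expression in~\eqref{eq:eta_prime}. The Lipschitz and LP ingredients have already been packaged in Proposition~\ref{prop:probe and expand} and the definition of $c(\myvar{x})$, so those require no additional work; the proof really hinges on the uniform $\rho$-bound plus a careful contrapositive argument tying $\eta'$ to the first unverifiable scale.
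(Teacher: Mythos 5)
Your proposal is correct and follows essentially the same route as the paper's proof: a case analysis of the two termination branches for Part~1, a uniform lower bound $\rho(\myvar{x})\ge 2\eta/(L_{A,\infty}\max_{\myvar{u}\in\mathbb{U}}\|\myvar{u}\|_\infty+L_{b,\infty})>0$ combined with the exponential decay of $r_k=\lambda^k r_0$ for Part~2, and a contradiction argument identifying $\underline{r}/\lambda$ as the critical scale matching $\eta'$ for Part~3(iii). Your explicit parent-cube bookkeeping in Part~3 locates the contradiction (the parent would have been discarded at Line~9 rather than refined) a bit more cleanly than the paper's phrasing, but it is the same argument.
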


\begin{proof}
From Proposition \ref{prop:grid sampling}, we know that the $n$-cubes generated by the grid sampling algorithm will over-approximate the safety set $\myset{C}$ (Line 2) or the remaining unchecked region $B(\myvar{x},r)\setminus B(\myvar{x},\rho)$ (Line 12) with a finite number of $n$-cubes.  If Algorithm \ref{alg:compatibility checking} terminates, it indicates that either an incompatible state is found or a  state set over-approximating the safety set has been checked for compatibility. This proves Property 1).

Now consider the case when the CBF $h_i(\myvar{x})$s are robustly compatible with level $\eta$, i.e., $c(\myvar{x})\ge \eta, \forall \myvar{x}\in \textup{Bound}(\myset{C})$. At the $k$th iteration, $r_k = \lambda^k r_0$ is the size of the $n$-cubes in $G_k$. Denote  $ \underline{\rho} = \min_{ \myvar{u}\in \mathbb{U}} \frac{ 2\eta}{L_{A,\infty}\| \myvar{u} \|_{\infty} + L_{b,\infty}}$. As $\mathbb{U}$ is assumed to be bounded, $ \underline{\rho} >0$. As $r_k$ is decreasing exponentially fast to zero as $k$ grows, we deduce there exists a $M\in \mathbb{N}$ such that $r_M \leq \underline{\rho}$.  Thus, Algorithm \ref{alg:compatibility checking} will terminate in finite steps. This completes the proof of Property 2).

If a lower bound of the lattice size $\underline{r}$ is incorporated, Algorithm \ref{alg:compatibility checking}  terminates in one of the following three cases: \textbf{Case a}. it has found an incompatible state and returns \texttt{False} (Line 7); \textbf{Case b}. it has checked all the $n$-cubes and returns \texttt{True} (Line 17); \textbf{Case c}. $r_M\le \underline{r}, M\in \mathbb{N}$ for the first time as the algorithm iterates, i.e., $r_0> \underline{r}, r_1> \underline{r}, ..., r_{M-1}> \underline{r}, r_M\le \underline{r}$, where $r_k, k\in \{0,1,2,...\}$ is the size of $n$-cubes in $G_k$ at the $k$th iteration (Line 4). These three termination cases correspond to the three possible results in Property 3). Since $r_k = \lambda^k r_0$, we deduce that such a $M\in \mathbb{N}$ exists. Thus, the finite-step termination is concluded since the algorithm executes at most $M $ iterations. 

Now we show Property 3) Result iii by contradiction. Assume that $h_i(\myvar{x}), i\in \mathcal{I}$ are robustly compatible with level $\eta^\prime$ in \eqref{eq:eta_prime}, then based on the analysis for Property 2), we know there exists a constant lattice size $\underline{\rho}^\prime  = \min_{ \myvar{u}\in \mathbb{U}} \frac{ 2\eta^{\prime}}{L_{A,\infty}\| \myvar{u} \|_{\infty} + L_{b,\infty}}$ and  if the size of the remaining $n$-cubes is smaller than $\underline{\rho}^\prime$, they can be checked in one iteration (Line 8). Substituting $\eta^\prime$ in \eqref{eq:eta_prime}, we have $\underline{\rho}^\prime = \lambda^{-1} \underline{r}$. Since  $ r_{M}\leq \underline{r}$, we have $r_{M-1} = \lambda^{-1}r_{M} \leq \underline{\rho}^\prime $ due to $\underline{\rho}^\prime = \lambda^{-1} \underline{r}$. This, however, contradicts with the termination condition of \textbf{Case c} that $r_{M-1}> \underline{r}$. Thus, the termination \textbf{Case c} occurs only if  $h_i(\myvar{x})$s are not robustly compatible with robustness level greater than $\eta^\prime$. This concludes the proof.
\end{proof}

\section{Discussions}

\subsection{Computational concerns}

As any lattice-based verification method, one main computational issue of Algorithm \ref{alg:compatibility checking} is the exponential growth of the number of the $n$-cubes as the system dimension grows and thus this approach is limited to low-dimensional systems.  We propose the following to mitigate the computational burdens.

Recall that to prove the forward invariance of the safety set, we only need to guarantee the CBF conditions around the safety boundary. This relaxation of the classic CBF conditions in \eqref{eq:cbf} has been explored in \cite{xiao2021high,cortez2021robust}. Thus, one way to reduce the number of $n$-cubes is to check the CBF compatibility only around the safety boundary. Specifically, we can change $G_{0} \leftarrow \texttt{GS}(\myset{C}, r_0)$ in Algorithm \ref{alg:compatibility checking} Line 2 to $G_{0} \leftarrow \texttt{GS}(\myset{C} \setminus \myset{C}_a, r_0)$, where $\myset{C}_a = \{\myvar{x}\in \mathbb{R}^n: h_i(\myvar{x})\ge a, a>0, \forall i \in \mathcal{I}\}$. By this modification, we do not need to check a region that lies strictly in the interior of the safety set $\myset{C}$. Another interpretation of this modification is that we can always choose \text{appropriate} $\alpha_i(\cdot)$ so that the third term in the CBF condition \eqref{eq:cbf} dominates the first two terms for all $\myvar{x}\in \myset{C}_a$ when the input set $\mathbb{U}$ is bounded and $\myset{C}$ is compact.

We also note that \texttt{CompatibilityChecking} is computed offline, irrelevant to the nominal control design, and, for each iteration, the process for each $n-$cube (Line 4 to Line 14 in Algorithm \ref{alg:compatibility checking}) can be executed in parallel. 

\subsection{Generalization to time-varying dynamics and safety set}
One basic setup in previous sections is that the control system \eqref{eq:nonlinear_dyn} and the safety set \eqref{eq:set_c} are time-invariant. The result can be trivially generalized to time-varying dynamics $\dot{\myvar{x}} = \mathfrak{f}(\myvar{x},t) + \mathfrak{g}(\myvar{x},t) \myvar{u}$ with a time-varying safety set $\myset{C}(t) = \{\myvar{x}\in \mathbb{R}^n: h_i(\myvar{x},t) \ge 0, \forall i \in  \mathcal{I}\}$ as follows. Let $\tilde{x} : = (\myvar{x},t)$ be a new state variable. Thus, we obtain the new dynamics $\dot{\tilde{\myvar{x}}} = \begin{psmallmatrix}
\mathfrak{f}(\tilde{\myvar{x}})\\
1
\end{psmallmatrix} + \begin{psmallmatrix}
\mathfrak{g}(\tilde{\myvar{x}}) \\
\myvar{0}
\end{psmallmatrix}\myvar{u}$ and the new safety set $\tilde{\myset{C}} = \{\tilde{\myvar{x}}\in \mathbb{R}^{n+1}: h_i(\tilde{\myvar{x}}) \ge 0, \forall i \in \mathcal{I} \}$. Due to Assumption \ref{ass:compact C}, however, we can only consider a bounded time interval.

\subsection{Alternative grid sampling methods}
In Algorithm \ref{alg:compatibility checking}, we have utilized the grid sampling algorithm (Algorithm \ref{alg:grid sampling}) which, for any compact set $\myset{S}\subset \mathbb{R}^n$, generates a finite number of $n$-cubes whose union over-approximates set $\myset{S}$. There are of course alternative grid sampling methods. One option is to use $n$-spheres $ B_S(\myvar{x},r) = \{ \myvar{y}\in \mathbb{R}^{n}: \| \myvar{y} - \myvar{x} \|\leq r \}$. Furthermore, we can show that,  in a similar manner to the proof of Proposition \ref{prop:probe and expand}, if $c(\myvar{x})>0$, then there exists a radius $\rho_S(\myvar{x})$ depending on $c(\myvar{x})$ and the Lipschitz constants $L_A, L_b$ of $A(\myvar{x}), b(\myvar{x})$ with respect to the $l_2$ norm such that the CBF conditions are compatible  for all $ B_S(\myvar{x},\rho_S)$. Thus, a similar algotithm as Algorithm \ref{alg:compatibility checking} and a similar theoretical result as Theorem \ref{thm:compatibility} can be developed.

Despite that choosing $n$-spheres for grid sampling is viable, we opt for $n$-cubes for the following reasons: 1) for an arbitrarily compact set $\myset{S}\subset \mathbb{R}^n, n>3$, it is generally difficult to generate a set of $n$-spheres whose union covers $\myset{S}$ with a small overlapping ratio; 2) since each row in $A(\myvar{x})$ and $b(\myvar{x})$ correspond to one safety constraint, we can calculate $L_{A,\infty}, L_{B,\infty}$ by checking each of the constraints individually, rendering the computation easier in general. A more detailed calculation of these Lipschitz constants is given in Example \ref{ex:example2} below.

\subsection{Other improvements}
There are also other heuristics to improve Algorithm \ref{alg:compatibility checking}. For example, instead of using the Lipschitz constants $L_{A,\infty}, L_{b,\infty}$ in $\textup{Bound}(\myset{C})$ in Line 5 of Algorithm \ref{alg:compatibility checking}, we can calculate a more precise $L_{A,\infty}, L_{b,\infty}$ in $B(\myvar{x},r)$ and then calculate $\rho(\myvar{x})$. This would lead to a larger $\rho$ and fewer iterations in general.

Another possible improvement is about the updated lattice size. It is reasonable to assume that $\rho(\myvar{x})$ will not vary too much in a neighborhood of $\myvar{x}$, thus the updated size $r^\prime$ in Line 11 of Algorithm \ref{alg:compatibility checking} could be upper bounded by $\rho(\myvar{x})$, i.e., $r^\prime \leftarrow \min(\rho,\lambda r)$ in place of $r^\prime \leftarrow \lambda r$. We note that this does not affect the theoretical results in Theorem \ref{thm:compatibility}.

\section{Case studies}
In this section we show more details on the algorithm implementation, especially the Lipschitz constant calculation, and demonstrate the efficacy of our proposed verification algorithm in several different scenarios. All the simulations are done using Matlab Parallel Computing Toolbox on an Intel i7-8650U CPU laptop.

\begin{figure*}[h!]
	\centering
	\begin{subfigure}[t]{0.3\linewidth}
		\includegraphics[width=\linewidth]{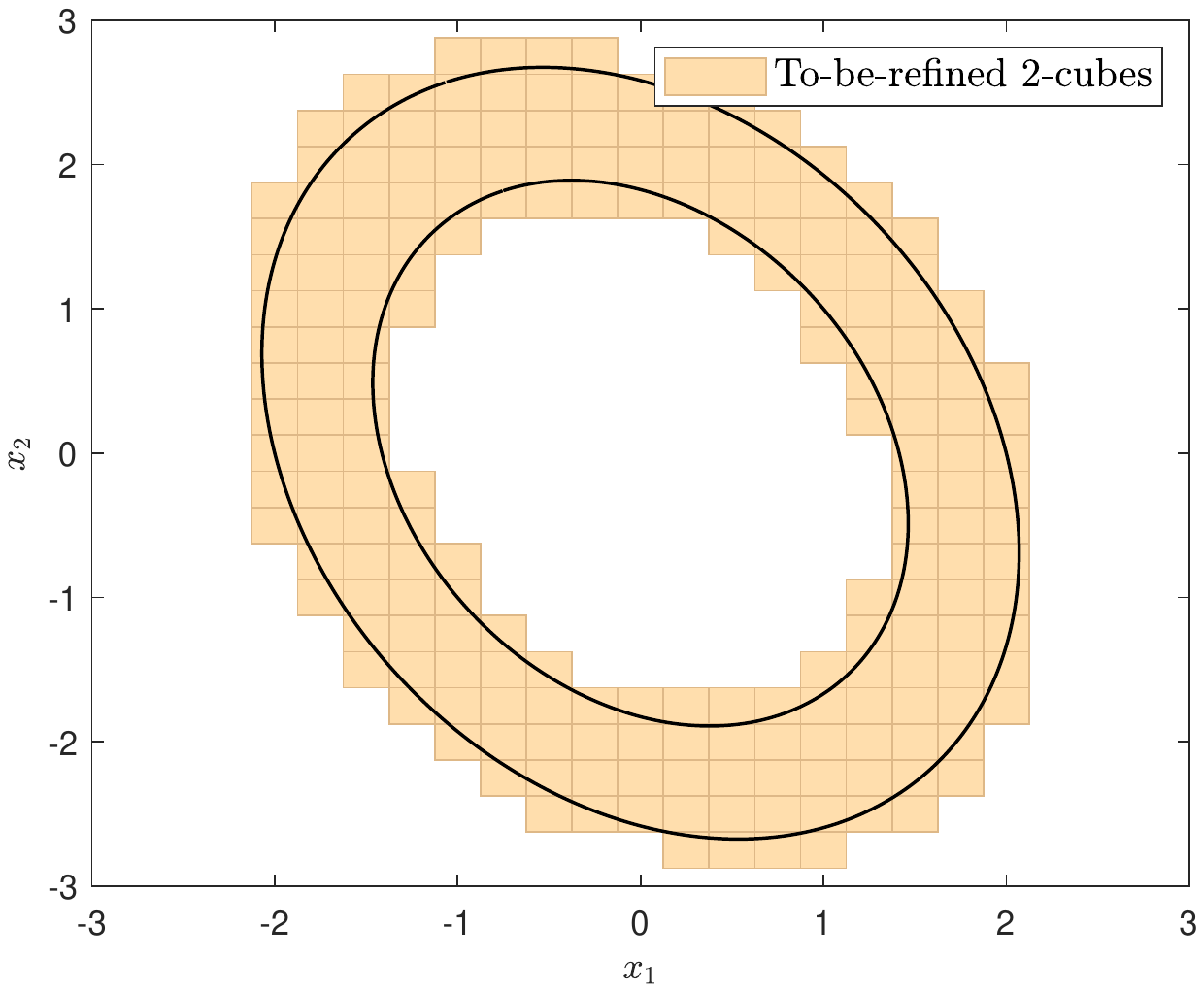}
		\caption{   First iteration, $r = 0.25$. }   
	\end{subfigure} 
	\begin{subfigure}[t]{0.3\linewidth}
		\centering\includegraphics[width=\linewidth]{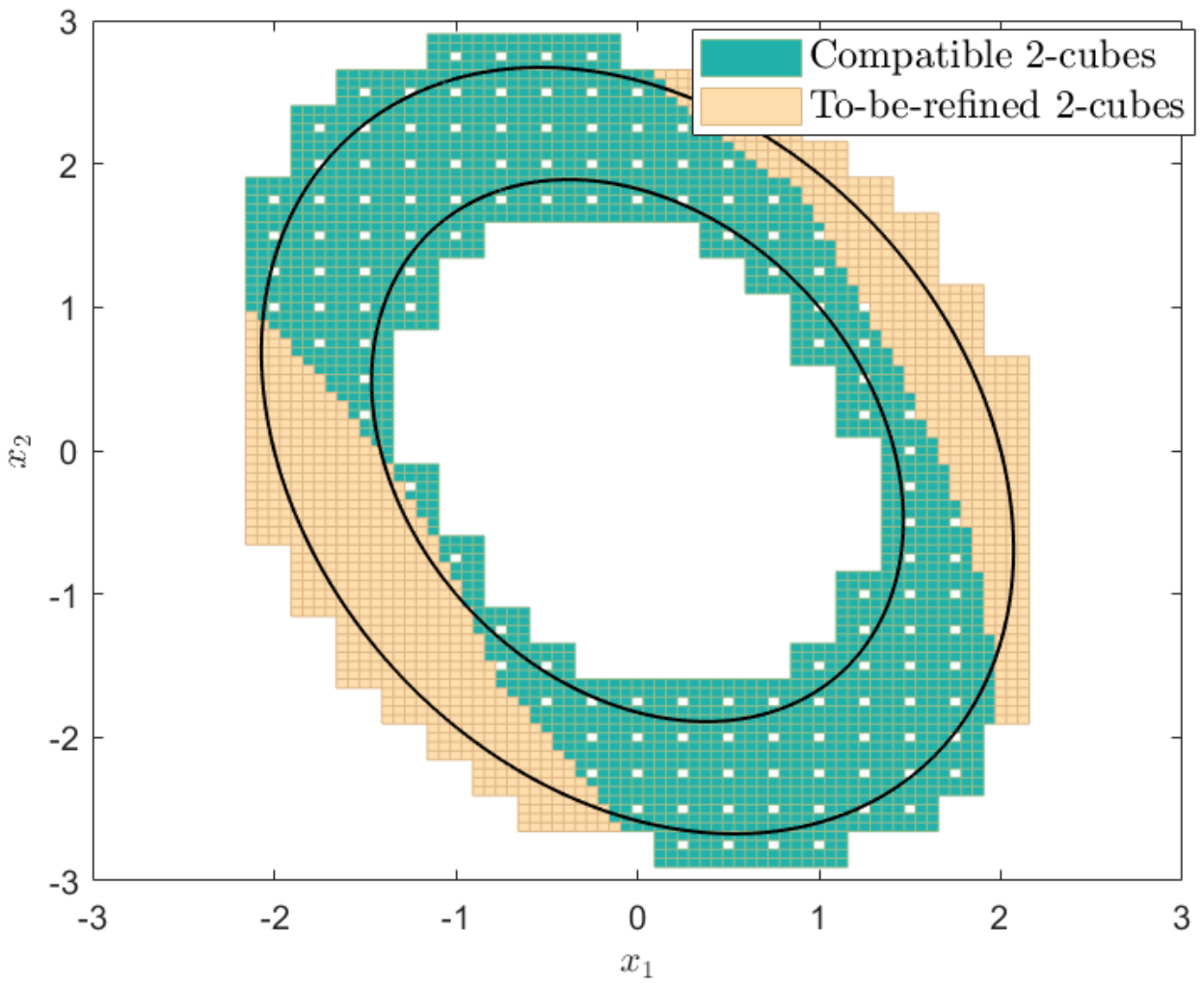}
		\caption{  Second iteration, $r = 0.0625$.}
	\end{subfigure} 
	\begin{subfigure}[t]{0.3\linewidth}
		\centering\includegraphics[width=\linewidth]{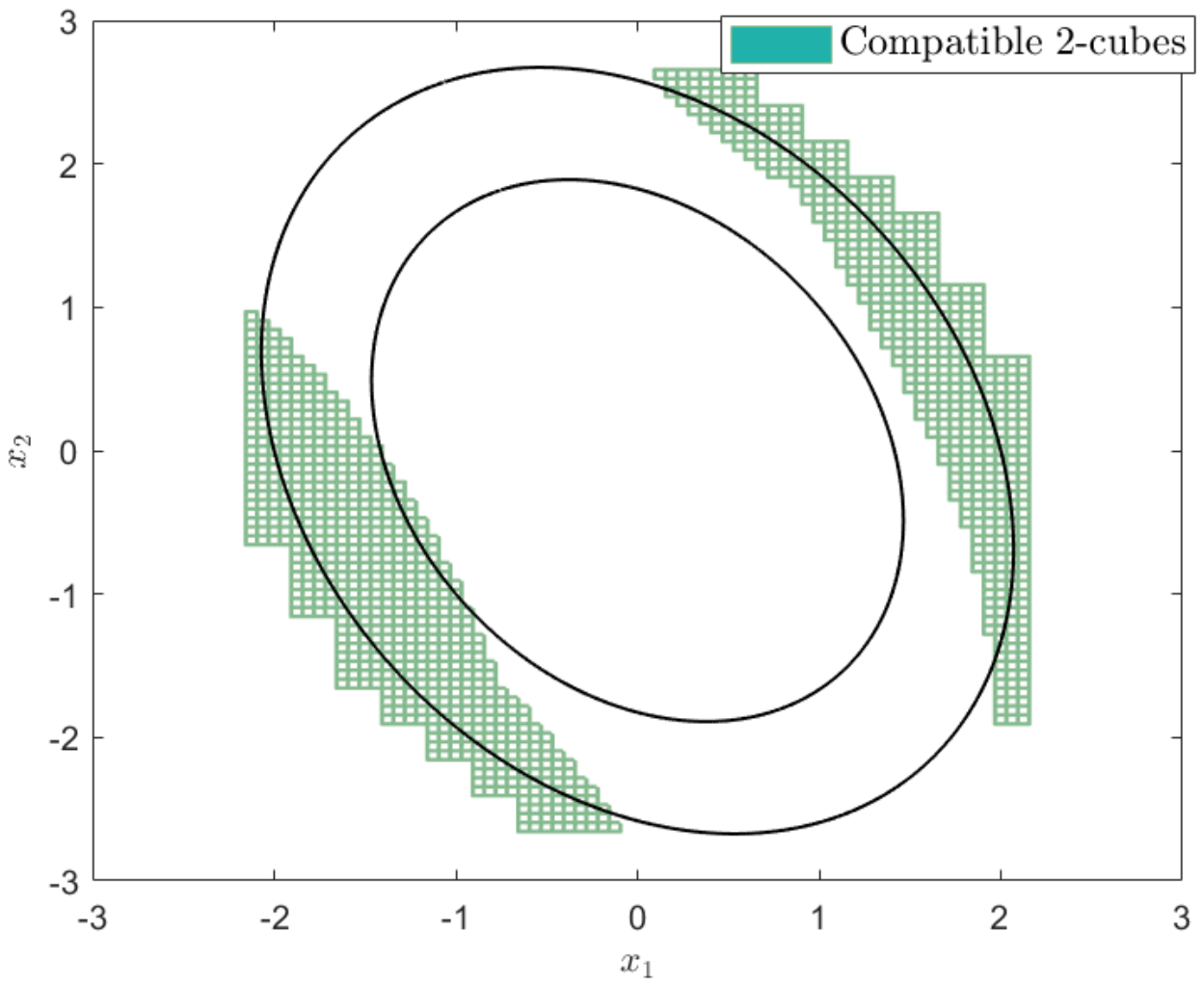}
		\caption{  Third iteration,  $r=0.0156$.}
	\end{subfigure}
	\caption{ Execution process of \texttt{CompatibilityChecking} in Example 2. The safety region is between the two ellipsoids. Compatible $2$-cubes: $2$-cubes within which the CBFs are verified to be compatible (in green); to-be-refined $2$-cubes: $2$-cubes that need further refinement (in yellow). (a) First iteration with the lattice size $r = 0.25$. All of the total $200$ $2$-cubes are to-be-refined $2$-cubes. (b) Second iteration with the lattice size $r = 0.0625$. The refined $2$-cubes are checked and $3204$ out of $4869$ are verified to be compatible $2$-cubes, and $1665$ $2$-cubes are refined again. (c) Third iteration with the lattice size $r = 0.0156$. All the $2$-cubes are compatible. Algorithm \ref{alg:compatibility checking} thus gives verification on the compatibility of the CBFs.  }  
	\label{fig:example2}	
	\end{figure*}

\begin{example}\label{ex:example2}
Consider a $2-D$ system with state variable $\myvar{x} = (x_1,x_2)$, input variable $ \myvar{u} = (u_1,u_2)$, dynamics
\begin{equation}
    \begin{pmatrix}
    \dot{x}_1 \\
    \dot{x}_2
    \end{pmatrix} = 
    \underbrace{\begin{pmatrix}
    x_1+x_2 \\
    -x_1^2/2
    \end{pmatrix}}_{\mathfrak{f}(\myvar{x})} + \underbrace{\begin{pmatrix}
    1 & 0 \\
    0 & 1
    \end{pmatrix}}_{\mathfrak{g}(\myvar{x})}\begin{pmatrix}
    u_1 \\
    u_2
    \end{pmatrix},
\end{equation}
and the input constraint set $\mathbb{U} = \{ (u_1,u_2): |u_1|\leq 3, |u_2| \leq 3\}.$ The two CBF candidates are $h_1(\myvar{x}) =  \myvar{x}^\top Q\myvar{x} -1,
    h_2(\myvar{x}) =  2- \myvar{x}^\top Q\myvar{x},$
where $Q = \begin{psmallmatrix}
0.5 & 0.1 \\
0.1 & 0.3\end{psmallmatrix}$. The safety set is $\myset{C} = \{ \myvar{x}: h_i(\myvar{x})\ge 0, i = 1,2\}$. The corresponding extended class $\mathcal{K}$ functions are chosen as $\alpha_1(v) = v, \alpha_2(v) = v, v\in \mathbb{R}$. The CBF conditions are then given by
\begin{equation*}
    A(\myvar{x}):=\begin{pmatrix}
\nabla^\top h_1(\myvar{x})  \\
\nabla^\top h_2(\myvar{x})
\end{pmatrix}, b(\myvar{x}):=\begin{pmatrix}
\nabla^\top h_1(\myvar{x})\mathfrak{f}(\myvar{x})+ h_1(\myvar{x}) \\
\nabla^\top h_2(\myvar{x})\mathfrak{f}(\myvar{x})+ h_2(\myvar{x}) 
\end{pmatrix}. 
\end{equation*}
where $\nabla h_1(\myvar{x})=(Q+Q^\top)\myvar{x} = 2Q\myvar{x}, \nabla h_2(\myvar{x})=-2Q\myvar{x} $. Choose the initial lattice size $r_0 = 0.25$. By applying \texttt{GridSampling}$(C,r_0)$, we obtain,  as shown in Fig. \ref{fig:gridsampling}, $\textup{Bound}(\myset{C}) = [-2.2,2.2]\times[-2.8,2.8]$.

Now we calculate the Lipschitz constants $L_{A,\infty}, L_{b,\infty}$ in $\textup{Bound}(\myset{C})$. We note that the Lipschitz constants can be obtained by considering each CBF individually, taking advantage of the fact that each row of $A(\myvar{x})$ and $b(\myvar{x})$ corresponds to one CBF. By definition,  $ L_{A,\infty}$ needs to satisfy
\begin{equation} \label{eq:L_Ainf condition}
    \| A(\myvar{x}) - A(\myvar{x}^\prime)\|_{\infty} = \| \begin{psmallmatrix}
    2(\myvar{x} - \myvar{x}^\prime)^\top Q \\
    - 2(\myvar{x} - \myvar{x}^\prime)^\top Q 
    \end{psmallmatrix}\|_{\infty} \leq L_{A,\infty}\| \myvar{x} - \myvar{x}^\prime\|_{\infty}
\end{equation}
for any $\myvar{x},\myvar{x}^\prime $ in $\textup{Bound}(\myset{C})$. Let $\myvar{a}_i(\myvar{x})$ be the $i$th row of $A(\myvar{x})$. Condition \eqref{eq:L_Ainf condition} is equivalent to
\begin{multline}
    \| \myvar{a}_i(\myvar{x}) - \myvar{a}_i(\myvar{x}^\prime)\|_1 = \| 2Q(\myvar{x} - \myvar{x}^\prime)\|_1  \\ \leq L_{A,\infty}\| \myvar{x} - \myvar{x}^\prime\|_{\infty}, \forall i\in \{1,2\}.
\end{multline}
This is implied by the condition $   \max_{\myvar{x},\myvar{x}^\prime} \frac{ \| 2Q(\myvar{x} - \myvar{x}^\prime)\|_1}{\| \myvar{x} - \myvar{x}^\prime\|_{\infty}} \leq L_{A,\infty}$.
Using the inequality $   \| \myvar{v}\|_{\infty} \leq \| \myvar{v}\|_1 \leq n\|\myvar{v}\|_{\infty}, \forall \myvar{v}\in \mathbb{R}^n,$
 we have
 \begin{multline}
     \max_{\myvar{x},\myvar{x}^\prime, \myvar{x}\neq \myvar{x}^\prime} \frac{ \| 2Q(\myvar{x} - \myvar{x}^\prime)\|_1}{\| \myvar{x} - \myvar{x}^\prime\|_{\infty}}  \\ \leq 2 \max_{\myvar{x},\myvar{x}^\prime,\myvar{x}\neq \myvar{x}^\prime} \frac{ \| 2Q(\myvar{x} - \myvar{x}^\prime)\|_1}{\| \myvar{x} - \myvar{x}^\prime\|_{1}} = 4\|Q\|_1 
 \end{multline}
The equality holds due to the definition of induced matrix norm. This reveals that $L_{A,\infty} = 4\| Q\|_1 = 2.4 $ satisfies \eqref{eq:L_Ainf condition}.\footnote{Here $\| Q\|_{1}$, where $Q$ is a matrix, refers to the induced matrix norm and can be calculated as the maximum absolute column sum of $Q$.}
 
 Similarly, $ L_{b,\infty}$ needs to satisfy
\begin{equation} \label{eq:L_binf condition}
    \| b(\myvar{x}) - b(\myvar{x}^\prime)\|_{\infty}  \leq L_{b,\infty}\| \myvar{x} - \myvar{x}^\prime\|_{\infty}
\end{equation}
for any $\myvar{x},\myvar{x}^\prime $ in $\textup{Bound}(\myset{C})$. Let $b_i(\myvar{x})$ be the $i$th row of $b(\myvar{x})$. Thus, \eqref{eq:L_binf condition} is equivalent to $ | b_i(\myvar{x}) - b_i(\myvar{x}^\prime) | \leq L_{b,\infty}\| \myvar{x} - \myvar{x}^\prime\|_{\infty}, \forall i  = \{1,2\},$ for any $\myvar{x},\myvar{x}^\prime $ in $\textup{Bound}(\myset{C})$. Recall that
\begin{align}
    b_1(\myvar{x}) = 2\myvar{x}^\top Q \mathfrak{f}(\myvar{x}) + \myvar{x}^\top Q \myvar{x}-1 \\
    b_2(\myvar{x}) = 2\myvar{x}^\top Q \mathfrak{f}(\myvar{x}) - \myvar{x}^\top Q \myvar{x}+2
\end{align}
We have $\nabla b_1(\myvar{x}) = 2Q \mathfrak{f}(\myvar{x}) + 2 \frac{\partial \mathfrak{f}}{\partial \myvar{x}}(\myvar{x}) Q \myvar{x} + 2Q\myvar{x}$,  $\nabla b_2 (\myvar{x}) =  2Q \mathfrak{f}(\myvar{x}) + 2 \frac{\partial \mathfrak{f}}{\partial \myvar{x}}(\myvar{x}) Q \myvar{x} - 2Q\myvar{x}$, where $\frac{\partial \mathfrak{f}}{\partial \myvar{x}}(\myvar{x}) = \begin{psmallmatrix}
    1 & 1 \\
    -x_1 & 0
\end{psmallmatrix}
$. Following Mean Value Theorem, we know $| b_i(\myvar{x}) - b_i(\myvar{x}^\prime) | \leq \max_{\myvar{v}\in \textup{Bound}(\myset{C})} \|\nabla b_i(\myvar{v})\|_{\infty} \| \myvar{x} - \myvar{x}^\prime\|_{\infty}, \forall \myvar{x}, \myvar{x}^\prime \in\textup{Bound}(\myset{C}), \forall i \in \{1,2\}.$  Thus we choose $L_{b,\infty} \ge \max_{i = 1,2}  \max_{\myvar{v}\in \textup{Bound}(\myset{C})}( \|\nabla b_i(\myvar{v})\|_{\infty})  $. This leads to solve two quadratic programs and  we obtain $L_{b,\infty} = 13$.

Now we have all the necessary elements to execute \texttt{CompatibilityChecking}. Choose $\lambda = 0.25$. The Algorithm terminates after $3$ iterations and verifies the compatibility of the two CBFs. The execution process takes 169s and is shown in Fig. \ref{fig:example2}.

\end{example}

\begin{figure}[h!]
	\centering
	\includegraphics[width=0.8\linewidth]{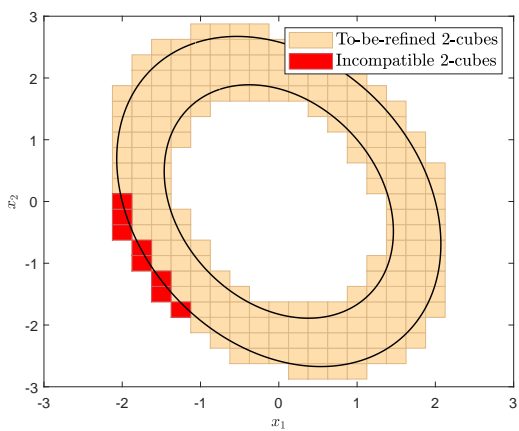}
    \caption{ Execution process of \texttt{CompatibilityChecking} in Example 3. The safety region is between the two ellipsoids. To-be-refined $2$-cubes: $2$-cubes that need further refinement (in yellow);  incompatible $2$-cubes: $2$-cubes whose center point is an incompatible state (in red). At the first iteration with the lattice size $r = 0.25$, Algorithm \ref{alg:compatibility checking} finds incompatible $2$-cubes and terminates, falsifying the CBF compatibility.  }  
    \label{fig:example3}	
	\end{figure}
	
\begin{example}
Now we consider the same scenario as in Example 2 but with a more stringent input set $\mathbb{U} =\{ (u_1,u_2): |u_1|\leq 2, |u_2| \leq 2\}$. This time, \texttt{CompatibilityChecking} gives a falsification on the compatibility. It finds an incompatible state $\myvar{x}_{in} = (-1.5,-1.25)$, at which point $h_1(\myvar{x}_{in}) =0.96, h_2(\myvar{x}_{in}) = 0.03, c(\myvar{x}_{in}) = -0.36$. The execution process takes 23s and is shown in Fig. \ref{fig:example3}.
\end{example}

\begin{example} 
In this example, a lower bound $\underline{r}$ of the size of the $2$-cubes is incorporated in Algorithm \ref{alg:compatibility checking} in a fashion described in Property 3) of Theorem \ref{thm:compatibility}.  The input set is again $\mathbb{U} =\{ (u_1,u_2): |u_1|\leq 3, |u_2| \leq 3\}$. In Example 2, we have already showed that the multiple CBFs are compatible. Now we would like to obtain an upper bound of the robustness level the multiple CBFs can attain. From Fig. \ref{fig:example2}, if we set $\underline{r} = 0.016$, then \texttt{CompatibilityChecking} takes 73s and terminates after $2$ iterations due to the lattice size decreases as the algorithm iterates and is smaller than $\underline{r}$ after the second iteration. Thus the termination belongs to \textbf{Case c} in the proof of Theorem \ref{thm:compatibility}. Based on \eqref{eq:eta_prime}, we thus know the multiple CBFs are at most robustly compatible with a robustness level $\eta = 0.6464 $. This is validated by, for example, considering that $c(-1.5,-1.25) = 0.5$.

\end{example}

\section{Conclusions}
In this work, we propose a grid sampling and refinement verification scheme for the compatibility checking of multiple control barrier functions for input constrained control systems. We provide both implementation details and theoretical guarantees for the verification problem. In particular, we show that if the algorithm terminates, it will give an exact answer to the compatibility problem. If the multiple CBFs are robustly compatible, then the algorithm is guaranteed to terminate in finite steps. If we also incorporate a lower bound on the size of the $n$-cubes, then we can also obtain the largest robustness level the multiple CBFs can possibly attain. We demonstrate the efficacy of the proposed algorithm in several scenarios that corroborates our theoretical results.

This work focuses on the compatibility checking of multiple CBFs. Future directions include how to determine the extended class $\mathcal{K}$ functions that mitigate the possible incompatibility  and/or increase the robustness level, and how to incorporate  the compatibility as a constraint with the online QP to ensure recursive feasibility.

\bibliographystyle{IEEEtran}
\bibliography{IEEEabrv,references}

\end{document}